\def\gsim{\;\rlap{\lower 2.5pt
 \hbox{$\sim$}}\raise 1.5pt\hbox{$>$}\;}
\def\lsim{\;\rlap{\lower 2.5pt
   \hbox{$\sim$}}\raise 1.5pt\hbox{$<$}\;}
\def\ie{{\it i.e. }}
\def\eg{{\it e.g. }}
\def\vtl{\vec{x}^L}
\def\vts{\vec{x}^S}
\title{Measuring the Reduced Shear}
\author{Jun Zhang}
\affiliation{Texas Cosmology Center, the University of Texas at Austin, \\
Austin, TX 78712, USA} 
\affiliation{Department of Astronomy, University of California, \\
Berkeley, CA 94720, USA}
\emailAdd{jzhang@astro.as.utexas.edu}
\abstract{Neglecting the second order corrections in weak lensing measurements can lead to a few percent uncertainties on cosmic shears, and becomes more important for cluster lensing mass reconstructions. Existing methods which claim to measure the reduced shears are not necessarily accurate to the second order when a point spread function (PSF) is present. We show that the method of Zhang (2008) exactly measures the reduced shears at the second order level in the presence of PSF. A simple theorem is provided for further confirming our calculation, and for judging the accuracy of any shear measurement method at the second order based on its properties at the first order. The method of Zhang (2008) is well defined mathematically. It does not require assumptions on the morphologies of galaxies and the PSF. To reach a sub-percent level accuracy, the CCD pixel size is required to be not larger than $1/3$ of the Full Width at Half Maximum (FWHM) of the PSF, regardless of whether the PSF has a power-law or exponential profile at large distances. Using a large ensemble ($\gtrsim 10^7$) of mock galaxies of unrestricted morphologies, we study the shear recovery accuracy under different noise conditions. We find that contaminations to the shear signals from the noise of background photons can be removed in a well defined way because they are not correlated with the source shapes. The residual shear measurement errors due to background noise are consistent with zero at the sub-percent level even when the amplitude of such noise reaches about $1/10$ of the source flux within the half-light radius of the source. This limit can in principle be extended further with a larger galaxy ensemble in our simulations. On the other hand, the source Poisson noise remains to be a cause of systematic errors. For a sub-percent level accuracy, our method requires the amplitude of the source Poisson noise to be less than $1/80\sim 1/100$ of the source flux within the half-light radius of the source, corresponding to collecting roughly $10^4$ source photons.}
\keywords{cosmology, large scale structure, gravitational lensing - methods, data analysis - techniques, image processing}
\newtheorem{theorem}{Theorem}[section]
\newenvironment{proof}[1][Proof]{\begin{trivlist}
\item[\hskip \labelsep {\bfseries #1}]}{\end{trivlist}}
\newcommand{\qed}{\nobreak \ifvmode \relax \else
      \ifdim\lastskip<1.5em \hskip-\lastskip
      \hskip1.5em plus0em minus0.5em \fi \nobreak
      \vrule height0.75em width0.5em depth0.25em\fi}
\begin{document}

\maketitle
\flushbottom

\section{Introduction}
\label{intro}

Weak gravitational lensing has been widely used as a direct probe of the mass distribution of our Universe on different scales, including large scale structure, clusters, galaxies, etc. \cite{hj08}. Not only is the physics of lensing well understood in the context of General Relativity, but the lensing effect can also be straightforwardly measured using the shapes of background galaxy images \cite{kwl00,vw00,wittman00}. Currently, one of the main challenges in this field is about how to accurately recover the cosmic shear field from galaxy shapes \cite{heymans06,massey07,bridle09a,bridle09b}. This is difficult due to the large galaxy shape noise, the involvement of the point spread function (PSF), the presence of the photon noise, the pixelation effect, etc.. There have been many literatures focusing on this particular topic \cite{tyson90,bonnet95,kaiser95,luppino97,hoekstra98,rhodes00,kaiser00,bridle01,bernstein02,refregierbacon03,massey05,kuijken06,miller07,nakajima07,kitching08,zhang08,zhang09}. 

In all of the practical shear measurement methods proposed so far, there is a common assumption: the cosmic shear is small, therefore the second or higher order terms in shear can be neglected. This is true for the shear field of our Universe on large scales, which is typically of order a few percent. However, future weak lensing survey may require shear measurement accuracy to be controlled below a $0.1\%$ level \cite{htbj06,ar08}. On arc minute angular scales, the second order terms can cause a systematic error of order $10\%$ to the cosmic shear power spectrum \cite{dsw06,shapiro09}. More importantly, the shear field by a foreground cluster can easily be of order ten percent or more. Neglecting second order terms in shear measurements can lead to significant errors on the implied cluster masses \cite{wittman01,hoekstra01,gray02,taylor04,broadhurst05,leonard07,heymans08,deb08}. 

The main purpose of this paper is to further develop the shear measurement method of \cite{zhang08}(Z08 hereafter) by including the second order terms in shear/convergence in the formalism. This is done straightforwardly in \S\ref{method}, in which we show that to the second order in accuracy, Z08 measures exactly the reduced shears. A popular misunderstanding in the weak lensing community is that all of the existing shear measurement methods are already accurate to the second order in shear/convergence, because they all claim to measure the reduced shears. We show why this is not generally true in the presence of PSF. In \S\ref{theorem}, we provide a simple and useful theorem for judging whether shear estimators are accurate to the second order based on their properties at the first order. The theorem provides an easy way to see why the method of Z08 yields exactly the reduced shear. \S\ref{test} demonstrates the accuracy of Z08 using a large number of computer-generated mock galaxies of unrestricted morphologies in the presence of PSF and the photon noise, including both the background noise and the source Poisson noise. Finally, we conclude in \S\ref{summary}.

\section{The Shear Measurement Method}
\label{method}

The basic idea of Z08 is to use the spatial derivatives of the galaxy surface brightness field to measure the cosmic shear. In a parallel paper \cite{zhang10}, we have shown that this method is equivalent to measuring shears using the galaxy quadrupole moments in Fourier space, with an additional term correcting for the PSF. Indeed, the measurement should be carried out in Fourier space, in which the moments can be easily evaluated, and the PSF can be transformed into the desired isotropic Gaussian form through multiplications. Further more, as shown in the parallel paper, not only the measurement, but also the whole analytic derivation of the relation between the galaxy surface brightness and the cosmic shear can be worked out in Fourier space in a much simpler way than in real space. As will be shown below, because of this convenience, the relation can be made accurate even to the second order in shear. 

Let us start the discussion in real space. We define the galaxy surface brightness distribution before lensing as $f_S(\vec{x}^S)$, the lensed galaxy (before being processed by the PSF) as $f_L(\vec{x}^L)$, and the observed image as $f_O(\vec{x}^O)$, where $\vec{x}^S$ is the coordinate in the source plane, and $\vec{x}^L$ and $\vec{x}^O$ are the positions in the image plane. We have the following relations:
\begin{eqnarray}
&&f_L(\vec{x}^L)=f_S(\vec{x}^S), \quad\quad \vec{x}^S={\mathbf M}\vec{x}^L, \nonumber \\
&&f_O(\vec{x}^O)=\int d^2\vec{x}^L W(\vec{x}^O-\vec{x}^L)f_L(\vec{x}^L),
\label{define1}
\end{eqnarray} 
where $W$ is the PSF, ${\mathbf M}$ is the lensing distortion matrix typically defined as: ${\mathbf M}_{ij}=\delta_{ij}-\phi_{ij}$ with $\phi_{ij}=\delta_{ij}-\partial x^S_i/\partial x^L_j$ being the spatial derivatives of the lensing deflection angle. $\phi_{ij}$ is often replaced by the convergence $\kappa$ [$=(\phi_{11}+\phi_{22})/2$] and the two shear components $\gamma_1$ [$=(\phi_{11}-\phi_{22})/2$] and $\gamma_2$ ($=\phi_{12}$). Note that here we adopt the convention used in most other lensing literatures (\eg, \cite{kaiser95,bs01}) to define the shear components and the convergence. They are different from the definitions in the other papers of this series \cite{zhang08,zhang09,zhang10} at the second order. Their differences are shown and discussed in the appendix. 

Z08 has shown that the following relations can be used to measure the cosmic shear:
\begin{eqnarray}
\label{shear12PSF}
\frac{1}{2}\frac{\langle\langle (\partial_1f_O)^2-(\partial_2f_O)^2\rangle_g\rangle_{en}}{\langle\langle (\partial_1f_O)^2+(\partial_2f_O)^2+\Delta\rangle_g\rangle_{en}}&=&-\gamma_1, \nonumber \\
\frac{\langle\langle\partial_1f_O\partial_2f_O\rangle_g\rangle_{en}}{\langle\langle (\partial_1f_O)^2+(\partial_2f_O)^2+\Delta\rangle_g\rangle_{en}}&=&-\gamma_2,
\end{eqnarray}
where $\partial_i$ denotes $\partial /\partial x_i$, and
\begin{equation}
\label{Delta}
\Delta=\frac{\beta^2}{2}\vec{\nabla}f_O\cdot\vec{\nabla}(\nabla^2f_O).
\end{equation}
$\beta$ is the scale radius of the isotropic Gaussian PSF $W_{\beta}$ defined as:
\begin{equation}
\label{igaussian}
W_{\beta}(\vec{\theta})=\frac{1}{2\pi\beta^2}\exp\left(-\frac{\vert\vec{\theta}\vert^2}{2\beta^2}\right).
\end{equation} 
$\langle\cdots\rangle_g$ means taking the average over the surface brightness field of a single galaxy. $\langle\cdots\rangle_{en}$ means taking the average over an ensemble of galaxies. Note that whatever the original PSF is, it is always transformed into the desired isotropic Gaussian form that is defined in eq.(\ref{igaussian}). This is allowed as long as the scale radius $\beta$ of the target PSF is somewhat larger than that of the original PSF. With the help of Fourier transformation, we now show how to improve the accuracy of eq.(\ref{shear12PSF}) to the second order in shear/convergence. Readers who are not interested in the mathematical derivations can jump to eq.(\ref{mmain}) and eq.(\ref{mmain_rs}) for the main results.

For a general 2D field $f(\vec{x})$, let us use $\widetilde{f}(\vec{k})$ to denote its Fourier transformation, which is defined as:
\begin{equation}
\label{FT}
\widetilde{f}(\vec{k})=\int d^2\vec{x} e^{i\vec{k}\cdot\vec{x}}f(\vec{x}).
\end{equation}
The terms in eq.(\ref{shear12PSF}) can be written as integrations of the Fourier modes of the images weighted by proper functions of the wave vector as follows \cite{zhang10}:
\begin{eqnarray}
\label{RF_derive}
\left\langle\partial_if_O(\vec{x})\partial_jf_O(\vec{x})\right\rangle_g&=&\frac{1}{S}\frac{\int d^2\vec{k}}{(2\pi)^2}k_ik_j\left\vert\widetilde{f}_O(\vec{k})\right\vert^2,\nonumber \\
\left\langle\vec{\nabla}f_O\cdot\vec{\nabla}(\nabla^2f_O)\right\rangle_g&=&-\frac{1}{S}\frac{\int d^2\vec{k}}{(2\pi)^2}\left\vert\vec{k}\right\vert^4\left\vert\widetilde{f}_O(\vec{k})\right\vert^2,
\end{eqnarray}
where $S$ is the total area of the map. Consequently, eq.(\ref{shear12PSF}) can be written in Fourier space as:
\begin{eqnarray}
\label{shearFourier}
\frac{1}{2}\frac{\left\langle P_{20}-P_{02}\right\rangle_{en}}{\left\langle P_{20}+P_{02}-\beta^2D_4/2\right\rangle_{en}}&=&-\gamma_1,\nonumber \\
\frac{\left\langle P_{11}\right\rangle_{en}}{\left\langle P_{20}+P_{02}-\beta^2D_4/2\right\rangle_{en}}&=&-\gamma_2,
\end{eqnarray}
where 
\begin{eqnarray}
\label{defineP}
P_{ij}&=&\int d^2\vec{k}k_1^ik_2^j\left\vert\widetilde{f_O}(\vec{k})\right\vert^2,\nonumber \\
D_n&=&\int d^2\vec{k}\left\vert\vec{k}\right\vert^n\left\vert\widetilde{f_O}(\vec{k})\right\vert^2.
\end{eqnarray}
Note that $D_4=P_{40}+2P_{22}+P_{04}$. To find out how $P_{ij}$ changes under lensing to the second order in shear/convergence, we use the following relations that are derived from eq.(\ref{define1}) and properties of Fourier transformation:
\begin{eqnarray}
\label{FT2}
\widetilde{f_L}(\vec{k}^L)&=&\int d^2\vts \left\vert \mathrm{det} \left(\frac{\partial \vtl}{\partial \vts}\right)\right\vert e^{i\vec{k}^L\cdot(\mathbf{M}^{-1}\vts)}f_S(\vts)\nonumber \\
&=&\vert \mathrm{det} (\mathbf{M}^{-1})\vert\int d^2\vts e^{i(\mathbf{M}^{-1}\vec{k}^L)\cdot\vts}f_S(\vts)\nonumber \\
&=&\vert \mathrm{det} (\mathbf{M}^{-1})\vert\widetilde{f_S}(\mathbf{M}^{-1}\vec{k}^L),\nonumber \\
\nonumber \\
\widetilde{f_O}(\vec{k})&=&\widetilde{W}_{\beta}(\vec{k})\widetilde{f_L}(\vec{k}).
\end{eqnarray}
Note that in the above equations, we have assumed that the PSF is isotropic Gaussian ($W_{\beta}$), whose Fourier transformation is $\widetilde{W}_{\beta}(\vec{k})$ [$=\exp(-\beta^2\vert\vec{k}\vert^2/2)$]. The dependence of $P_{ij}$ on the cosmic shear can be derived as follows: 
\begin{eqnarray}
\label{pldl}
P_{ij}&=&\int d^2\vec{k}k_1^ik_2^j\left\vert\widetilde{W}_{\beta}(\vec{k})\vert \mathrm{det} (\mathbf{M}^{-1})\vert\widetilde{f_S}(\mathbf{M}^{-1}\vec{k})\right\vert^2\nonumber \\
&=&\vert\mathrm{det} (\mathbf{M}^{-1})\vert\int d^2\vec{k}(\mathbf{M}\vec{k})_1^i(\mathbf{M}\vec{k})_2^j\left\vert\widetilde{W}_{\beta}(\mathbf{M}\vec{k})\widetilde{f_S}(\vec{k})\right\vert^2.
\end{eqnarray}
The last step is achieved by re-defining $\mathbf{M}^{-1}\vec{k}$ as $\vec{k}$. Expanding eq.(\ref{pldl}) up to the second order in shear/convergence, and using the fact that the intrinsic galaxy shapes are isotropic (\ie, $\langle \left\vert\widetilde{f}_S(\vec{k})\right\vert^2\rangle_{en}$ only depends on $\left\vert\vec{k}\right\vert$), it is now straightforward, though a little tedious, to show the following:
\begin{eqnarray}
\label{main_result}
\left\langle P_{20}-P_{02}\right\rangle_{en}&=&-2\gamma_1(1+\kappa)\left\langle D_2^S\right\rangle_{en}+\gamma_1(1-5\kappa)\beta^2\left\langle D_4^S\right\rangle_{en}\nonumber \\
&&+2\gamma_1\kappa\beta^4\left\langle D_6^S\right\rangle_{en}+O(\gamma^3),\nonumber \\
\nonumber \\
\left\langle P_{11}\right\rangle_{en}&=&-\gamma_2(1+\kappa)\left\langle D_2^S\right\rangle_{en}+\frac{1}{2}\gamma_2(1-5\kappa)\beta^2\left\langle D_4^S\right\rangle_{en}\nonumber \\
&&+\gamma_2\kappa\beta^4\left\langle D_6^S\right\rangle_{en}+O(\gamma^3),\nonumber \\
\nonumber \\
\left\langle P_{20}+P_{02}\right\rangle_{en}&=&(1+2\gamma_1^2+2\gamma_2^2)\left\langle D_2^S\right\rangle_{en}\nonumber \\
&&+(2\kappa-\kappa^2-3\gamma_1^2-3\gamma_2^2)\beta^2\left\langle D_4^S\right\rangle_{en}\nonumber \\
&&+(2\kappa^2+\gamma_1^2+\gamma_2^2)\beta^4\left\langle D_6^S\right\rangle_{en}+O(\gamma^3),\nonumber \\
\nonumber \\
\left\langle D_4\right\rangle_{en}&=&(1-2\kappa+\kappa^2+5\gamma_1^2+5\gamma_2^2)\left\langle D_4^S\right\rangle_{en}\nonumber \\
&&+(2\kappa-5\kappa^2-5\gamma_1^2-5\gamma_2^2)\beta^2\left\langle D_6^S\right\rangle_{en}\nonumber \\
&&+(2\kappa^2+\gamma_1^2+\gamma_2^2)\beta^4\left\langle D_8^S\right\rangle_{en}+O(\gamma^3),
\end{eqnarray} 
where
\begin{equation}
\label{defineDS}
D_n^S=\int d^2\vec{k}\left\vert\vec{k}\right\vert^n\left\vert\widetilde{W}_{\beta}(\vec{k})\widetilde{f_S}(\vec{k})\right\vert^2.
\end{equation}
Note that $O(\gamma^3)$ refers to terms of the third or higher orders in $\gamma_{1,2}$, $\kappa$. From eq.(\ref{main_result}), we find the main results of this paper:
\begin{eqnarray}
\label{mmain}
\frac{1}{2}\frac{\left\langle P_{20}-P_{02}\right\rangle_{en}}{\left\langle P_{20}+P_{02}-\beta^2D_4/2\right\rangle_{en}}&=&-\frac{\gamma_1}{1-\kappa}+O(\gamma^3),\nonumber \\
\frac{\left\langle P_{11}\right\rangle_{en}}{\left\langle P_{20}+P_{02}-\beta^2D_4/2\right\rangle_{en}}&=&-\frac{\gamma_2}{1-\kappa}+O(\gamma^3).
\end{eqnarray}
The terms on the right sides of eq.(\ref{mmain}) are called reduced shears. For convenience, in the rest of the paper, we use $g_{1,2}$ to represent $\gamma_{1,2}/(1-\kappa)$ respectively. One can equivalently write down the above formula in real space as:
\begin{eqnarray}
\label{mmain_rs}
\frac{1}{2}\frac{\langle\langle (\partial_1f_O)^2-(\partial_2f_O)^2\rangle_g\rangle_{en}}{\langle\langle (\partial_1f_O)^2+(\partial_2f_O)^2+\Delta\rangle_g\rangle_{en}}&=&-g_1+O(\gamma^3), \nonumber \\
\frac{\langle\langle\partial_1f_O\partial_2f_O\rangle_g\rangle_{en}}{\langle\langle (\partial_1f_O)^2+(\partial_2f_O)^2+\Delta\rangle_g\rangle_{en}}&=&-g_2+O(\gamma^3).
\end{eqnarray}
We have just shown that to the second order in accuracy, our shear estimators yield exactly the reduced shears. It is important to note that the ensemble averages in eq.(\ref{mmain}) / eq.(\ref{mmain_rs}) should be taken before the ratios. This point is extensively discussed in \cite{zhang10}.

The novelty of our result may be questioned by those who believe that the existing shear measurement methods already claim to measure the reduced shears. Such a judgement is mainly based on the fact that the lensing distortion matrix ${\mathbf M}$ can be re-written as:
\begin{equation}
\label{re_define_M}
{\mathbf M}=(1-\kappa)\left(\begin{array}{cc}
1-g_1 & -g_2 \\
-g_2 & 1+g_1 \end{array}\right),
\end{equation}
which shows that $\kappa$ changes the galaxy size, and the shape distortions are exclusively due to the reduced shears $g_{1,2}$. Therefore, the results of shape/shear measurements should only depend on the reduced shears, not the convergence, as long as the shape measurement is decoupled from the galaxy size. This is usually the case in the absence of PSF. When a PSF is present, however, the convergence $\kappa$ can modify the observed galaxy shape through changing the galaxy size relative to that of the PSF. Generally, for spin-2 shear estimators\footnote{According to \cite{zhang10}, we only need to consider spin-2 shear estimators.} such as those in most other existing methods, the results can generally be written as power series of $g_{1,2}$ and $\kappa$ as\footnote{The formula is initiated by the referee of this paper.}:
\begin{equation} 
\label{expansion}
g_{1,2}\left[a+b\kappa+c(g_1^2+g_2^2)+\cdots\right], 
\end{equation}
in which $a$, $b$, and $c$ are coeficients that may require calibrations ($a=1$ and $b=0$ in our method). Obviously, unless $b$ is proved to be zero, a shear measurement method does not directly measure the reduced shear at the second order level. For example, if we mistakenly use $(1/2)\langle P_{20}-P_{02}\rangle_{en}/\langle P_{20}+P_{02}\rangle_{en}$ and $\langle P_{11}\rangle_{en}/\langle P_{20}+P_{02}\rangle_{en}$ defined in eq.(\ref{main_result}) as shear estimators (\ie, neglecting the term $D_4$ for the PSF correction), we find the following formulae to the second order in accuracy:
\begin{equation}
\label{mistakenly}
\frac{1}{2}\frac{\left\langle P_{20}-P_{02}\right\rangle_{en}}{\left\langle P_{20}+P_{02}\right\rangle_{en}}=-g_1(a+b\kappa), \quad\quad \frac{\left\langle P_{11}\right\rangle_{en}}{\left\langle P_{20}+P_{02}\right\rangle_{en}}=-g_2(a+b\kappa),
\end{equation}
in which 
\begin{equation}
\label{definition_ab}
a=1-\frac{u}{2}, \quad\quad b=u+u^2-\beta^4\frac{\left\langle D_6^S\right\rangle_{en}}{\left\langle D_2^S\right\rangle_{en}}, \quad\quad u=\beta^2\frac{\left\langle D_4^S\right\rangle_{en}}{\left\langle D_2^S\right\rangle_{en}}.
\end{equation}
This simple example has shown that if accuracy at the second order is desired, one should pay attention to the possible existence of terms proportional to $g_{1,2}\kappa$ in addition to the reduced shears. If one plans to achieve the second order accuracy by calibrating the multiplicative factor $a+b\kappa$ using numerical simulations, it is important to realize that the factor typically depends on $\kappa$. These troubles are not present in our method.

\section{A Useful Theorem}
\label{theorem}
 
Since the pre-lensing galaxy sizes are not known a priori, it would be ideal for the statistical expectation of a shear estimator to depend only on the reduced shears, \ie, to be completely decoupled from the galaxy sizes, or the convergence $\kappa$. In terms of eq.(\ref{expansion}), this corresponds to $b=0$, and the coefficients in front of other high order $\kappa$ terms are all zero. This is, however, highly nontrivial to achieve in the presence of PSF, especially when all high order shear/convergence terms are considered. Less ambitiously, we have shown in the previous section that the shear estimators of Z08 indeed decouple from the galaxy sizes up to the second order in shear/convergence. Our calculation is quite laborious, and the situation is likely similar in all other shear measurement methods. Fortunately, we find a simple way of knowing the property of a shear estimator at the second order according to its first order results:
\begin{theorem}
\label{reduced_shear}
For spin-2 shear estimators whose statistical expectation values (in the presence of PSF, but without noise) can be written in the form of eq.(\ref{expansion}), if $a\equiv 1$ (\ie, the method is free of calibrations at the first order), we must have $b\equiv 0$.
\end{theorem}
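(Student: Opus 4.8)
The plan is to exploit the multiplicative decomposition of the lensing matrix given in eq.(\ref{re_define_M}), $\mathbf{M}=(1-\kappa)R$, where $R$ is the reduced-shear matrix appearing there. This isolates the convergence as a pure isotropic rescaling, and the central idea is that such a rescaling never acts on the galaxy shape directly: it can instead be \emph{traded} for a rescaling of the Gaussian PSF width $\beta$. Once convergence has been re-expressed as the shift $\beta\to(1-\kappa)\beta$, the scale invariance of any spin-2 shape estimator forces the entire $\kappa$-dependence of its expectation to be fixed by the $\beta$-dependence of the first-order coefficient $a$. The theorem then follows because a coefficient that is identically $1$ has no $\beta$-dependence at all.

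First I would establish a geometric scaling lemma. Writing the reduced-shear-only, PSF-free image as $\tilde f_S(\vec{x})=f_S(R\vec{x})$, eq.(\ref{define1}) gives the lensed galaxy before convolution as $f_L(\vec{x})=\tilde f_S\big((1-\kappa)\vec{x}\big)$, \ie an isotropically rescaled copy of $\tilde f_S$. Convolving with $W_{\beta}$, changing integration variables, and using the elementary identity $W_{\beta}(\vec{w}/\lambda)=\lambda^2 W_{\lambda\beta}(\vec{w})$ that follows directly from eq.(\ref{igaussian}), I would obtain
\begin{equation}
f_O^{(\kappa,\beta)}(\vec{x})=f_O^{(0,(1-\kappa)\beta)}\big((1-\kappa)\vec{x}\big),
\end{equation}
where the superscripts record the convergence and PSF width used to generate the observed image (the reduced shear, held fixed, is suppressed). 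In words, the image formed with convergence $\kappa$ and PSF width $\beta$ is, shape for shape, the same as the image formed with \emph{no} convergence, the same reduced shear, and PSF width $(1-\kappa)\beta$, the two differing only by a global coordinate rescaling.

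Next I would feed this identity into the estimator. Because the estimator is spin-2 and responds only to shape, and because its built-in PSF scale is matched to the true PSF, it is invariant under a simultaneous isotropic rescaling of the image coordinates and of $\beta$. Applying this invariance to the lemma removes the overall factor $1-\kappa$ and yields, galaxy by galaxy and hence after the (isotropic) ensemble average, the exact relation
\begin{equation}
E(\kappa;\beta)=E\big(0;(1-\kappa)\beta\big),
\end{equation}
where $E(\kappa;\beta)$ denotes the statistical expectation of the estimator, now regarded as a function of $\beta$ as well. I expect this to be the main obstacle: one must argue carefully that it is precisely the spin-2, size-decoupled character of the estimator---together with the matching of its internal PSF parameter to the actual PSF---that makes it invariant under the global magnification produced by the lemma, and that the PSF-correction piece of the estimator does not spoil this invariance.

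Finally I would match the expansion (\ref{expansion}) on the two sides. At fixed reduced shear the left-hand side is $E(\kappa;\beta)=g_{1,2}\big[a(\beta)+b(\beta)\kappa+\dots\big]$, while on the right-hand side the convergence enters only through the shifted width, so to first order in $\kappa$ it reads $E\big(0;(1-\kappa)\beta\big)=g_{1,2}\,a\big((1-\kappa)\beta\big)=g_{1,2}\big[a(\beta)-\kappa\beta\,\mathrm{d}a/\mathrm{d}\beta+\dots\big]$. Equating the coefficients of $g_{1,2}\kappa$ gives the compact relation $b(\beta)=-\beta\,\mathrm{d}a/\mathrm{d}\beta$, which expresses the second-order cross term entirely through the first-order coefficient; as a consistency check this identity reproduces the explicit $a$ and $b$ of eq.(\ref{definition_ab}). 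If $a\equiv 1$ for every PSF width, then $\mathrm{d}a/\mathrm{d}\beta=0$ and therefore $b\equiv 0$, which is the claim.
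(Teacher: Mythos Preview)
Your proposal is correct and rests on the same underlying idea as the paper's proof: the factor $(1-\kappa)$ in eq.~(\ref{re_define_M}) is an isotropic rescaling, so at fixed reduced shear the convergence can be traded for a change of scale, and hence the $\kappa$-dependence of the expectation is entirely controlled by the size-dependence of the first-order coefficient $a$. The execution differs slightly. The paper absorbs the rescaling into the \emph{intrinsic galaxy sizes} while keeping the PSF width $\beta$ fixed, so that the observed images in the two scenarios ($\kappa=0$ versus $\kappa\neq 0$ with rescaled sources) are literally identical; the equality of estimator outputs is then immediate, with no appeal to scale invariance. You instead keep the galaxies fixed and push the rescaling onto $\beta$, which requires the extra (mild, but not stated in the theorem) assumption that the estimator is invariant under a simultaneous rescaling of the image and of its PSF parameter---you correctly flag this as the delicate step. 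What your route buys is the explicit quantitative relation $b(\beta)=-\beta\,\mathrm{d}a/\mathrm{d}\beta$, which neatly reproduces eq.~(\ref{definition_ab}) and makes transparent why $a\equiv 1$ forces $b\equiv 0$; the paper states only the qualitative version of this (that if $a$ varies with galaxy size then $b$ need not vanish) in the remark following its proof.
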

\begin{proof}
If $a\equiv 1$, the statistical expectations of the shear estimators can be written as:
\begin{equation} 
\label{expansion2}
g_{1,2}\left[1+b\kappa+c(g_1^2+g_2^2)+\cdots\right], 
\end{equation}
For a given set of observed galaxies, let us consider the following two cases with fixed $g_1$ and $g_2$:

1. $\kappa=0$;

2. $\kappa\neq 0$, and the galaxies are intrinsically larger than those in the first case by a factor of $(1-\kappa)$. 

Since the observed galaxy images in the two cases are identical, the shear measurement method should yield the same results. On the other hand, according to eq.(\ref{expansion2}), we should get (if accurate to the second order) $g_{1,2}$ in the first case, and $g_{1,2}(1+b\kappa)$ in the second case. Therefore, we have $b\equiv 0$. \qed
\end{proof}

The above theorem shows that $a=1$ is a sufficient (not necessary) condition for $b=0$. For $a$ being any nonzero constant, the shear estimator can be trivially rescaled to make $a=1$, which again leads to $b=0$. When $a$ is a function of the galaxy morphology and size in a method, the first case in the above proof yields $g_{1,2}a'$, while the second case gives $g_{1,2}(a+b\kappa)$. Since $a$ and $a'$ can naturally have a difference of order $\kappa$ due to the change of the intrinsic galaxy sizes, the value of $b$ is not necessarily equal to zero. 
 
The theorem can be phrased in plain words as: if a method measures exactly the shears at the first order without the need of calibrations, it must yield exactly the reduced shears at the second order level. One can immediately apply this theorem to the method of Z08 to show why it accurately measures the reduced shear without the lengthy calculation in \S\ref{method}. The theorem is equally useful for judging the accuracy of any shear measurement method at the second order based on its properties at the first order.

\section{Numerical Test}
\label{test}

In this section, we test the accuracy of shear recovery with a large ensemble ($\gtrsim 10^7$) of mock galaxies. In principle, eq.(\ref{mmain}) / eq.(\ref{mmain_rs}) allows us to recover the cosmic shear to a sub-percent level accuracy. In practice, however, one needs to consider at least two important factors: the pixelation effect and the photon noise. The purpose of this section is to test this method under realistic conditions, and to show how these factors may affect the accuracy of our method.

\cite{zhang09} has introduced useful interpolation methods to treat the pixelation effect, which becomes a problem for shear measurement when the CCD pixel size is comparable to the size of the PSF. We find that these methods are not accurate at the sub-percent level in terms of shear recovery. On the other hand, in the course of this work, we surprisingly find that if it is only for the purpose of shear measurement, oversampling galaxy images does not seem necessary. At least for the PSFs used in this work, the shear estimators calculated in Fourier space converge quickly when the pixel size is less than about $1/3$ of the FWHM of the PSF. This will be demonstrated in \S\ref{pixelation}. 

The treatment of the background photon noise has also been discussed in \cite{zhang09}. It can in principle be used to deal with both the background fluctuation and its Poisson noise, since they are both independent of the source flux. There is no need for any modifications of the treatment, because the quantities that we need to measure from each galaxy are the same as those discussed in Z08. The treatment is indeed simple: for each galaxy, one subtracts from the nominators and denominators in eq.(\ref{mmain}) / eq.(\ref{mmain_rs}) the contributions from the photon noise that are estimated from a neighboring map of pure noise \cite{zhang09}. Note again that the ratios should be taken after the ensemble averages. As will be shown in \S\ref{ssNoise}, the shear measurement errors due to the background noise can be removed cleanly with the treatment of \cite{zhang09}. On the other hand, we do not yet have a way to correct the shear measurement errors due to the source Poisson noise. In this paper, we simply perform the shear measurements without additional treatment of the source Poisson noise. A further development in this aspect will be studied in a separate work.

\subsection{General Setup}
\label{general_setup}

Each of our mock galaxies is placed at the center of a $192\times192$ grid. The grid size is used as the length unit in the rest of this paper. The PSF has a truncated Moffat profile used in the GREAT08 project (see \cite{bridle09b} for details): 
\begin{eqnarray}
\label{PSFs}
W_{I}(r)\propto\begin{cases}\left(1+\frac{r}{r_d}\right)^{-3.5}, & r<r_c \\ 0, & r\geq r_c\end{cases} 
\end{eqnarray}
The FWHM of this PSF is very close to $r_d$. We always set $r_c=3r_d$ in the simulations of this paper. In our shear measurement method, the PSF is always transformed into the isotropic Gaussian form in Fourier space before the shear measurement is carried out. The scale radius ($\beta$) of the target PSF defined in eq.(\ref{igaussian}) is set to $0.7r_d$, so that the FWHM of the target PSF is slightly larger than that of the original PSF. $r_d$ is set to $12$ in unit of the grid size. Note that the grid size is not equivalent to the pixel size. The later is always chosen to be an integer multiple of the grid size in this paper. 

Our mock galaxies are made of point sources that are generated by 2D random walks. There are at least four main purposes for this arrangement: 

{\bf 1.} To maximize to some extent the richness of galaxy morphologies;

{\bf 2.} The lensing effect can be exactly mimicked by simply changing the positions of the point sources; 

{\bf 3.} Convolution of the galaxy image with any PSF is trivial and easy;

{\bf 4.} It is extremely fast to generate such galaxies.

Based on the above facts, we encourage everyone working in the field of shear measurement to test their methods with the random-walk-generated mock galaxies\footnote{We are aware of the fact that our random-walk galaxies do not have certain properties of real galaxy, such as the sharp cusp, the long tails of the de-Vaucouleurs profile, or any systematic trends in ellipticity gradients with radius. However, sharp cusp should not cause any singularity in the method because of the smoothing by the PSF. Neither the long tails of the de-Vaucouleurs profiles nor any systematic trends in ellipticity gradients with radius seems to be able to affect the accuracy of this method as long as the whole/complete galaxy image is captured/used. A more detailed study of these issues will be included in a future work.}. In our simulations, each such mock galaxy is made of forty point sources, whose positions are determined by forty steps of 2D random walk. Each step size is a random number between $0$ and $1$. The direction of every step is completely random in 2D. For each galaxy, the first step starts from the center of the grid. The ending position of the $i^{th}$ step is where the $i^{th}$ point source is. In the course of the random walk, if the distance of the $i^{th}$ point to the center of the grid is more than $6$, we restart the $i^{th}$ step from the center of the map. Every point source of a galaxy is assumed to carry the same luminosity. The resulting post-seeing galaxy has roughly the same FWHM as that of the PSF.  

To test the shear recovery accuracy of our method, we use six sets of input shear values ($\gamma_1$, $\gamma_2$): ($0.05$, $-0.05$), ($0.03$, $-0.03$), ($0.01$, $-0.01$), ($-0.01$, $0.01$), ($-0.03$, $0.03$), ($-0.05$, $0.05$). $\kappa$ is fixed at $0.05$. For each set of the shear values, we use about $10^7$ mock galaxies to recover the shear. To calibrate the shear recovery accuracy quantitatively, we adopt the commonly used multiplicative bias $m$ and additive bias $c$ defined as follows: 
\begin{eqnarray}
\label{mc}
g_1^{measured}&=&(1+m_1)g_1^{input}+c_1,\nonumber \\
\nonumber \\
g_2^{measured}&=&(1+m_2)g_2^{input}+c_2.
\end{eqnarray}
For each shear component, we calculate the two bias parameters by fitting a linear relation between the measured and the input values of the reduced shear.

\subsection{Test the Pixelation Effect}
\label{pixelation}

We set the pixel size to be a multiple of the grid size to mimic the pixelation effect. Since we need to use Fast Fourier Transform (FFT) in our method, the size of each galaxy postage stamp is chosen to be an integer power of $2$ times the pixel size along both sides. The power integer is chosen to be the value that allows the postage stamp to cover the most area of the whole grid. For example, if the ratio of the pixel size to the grid size is set to $5$, the postage stamp contains $32\times 32$ pixels, because $5\times 32 \leq 192$, and $5\times 64 > 192$. Note that our choice of the postage stamp size is large enough for avoiding cutting off the edges of the simulated galaxies. The simulations in this section do not contain noise. 

Under-sampling of the galaxy images can certainly affect the shear recovery accuracy. We have discussed a few interpolation methods in \cite{zhang09} to help reduce the systematic errors due to the pixelation effect. The most accurate methods are found to be the so called ``Log-Bicubic'' or ``Log-Spline'' methods (equally good), which simply refer to performing the traditional ``Bicubic'' and ``Spline'' methods on the logarithm of the data instead of the data itself. The focus of the previous work is on the Gaussian PSF. The PSF defined in eq.(\ref{PSFs}) has a power-law decaying form when the distance to the center is large. For this type of PSF, it is perhaps not surprising that the Log-Bicubic method is found not to work as well as in the Gaussian PSF case, although it is still better than the traditional Bicubic method. On the other hand, fortunately, we find that {\it shear measurement in Fourier space is quite robust even for marginally under-sampled images} (without the need of interpolation), regardless of whether the PSF is a Gaussian function or a power-law form. This is demonstrated in fig.\ref{pixelation_GREAT08} and fig.\ref{pixelation_Gaussian}, which shows how fast the measured value of the first shear component from a single galaxy (defined on the left side of the first half of eq.(\ref{mmain})) converges for an increasingly smaller pixel size. The results are shown for three methods: using the Bicubic interpolation method (the dotted curves); using the Log-Bicubic method (the dashed curves); direct Fourier transformation/shear measurement (the solid curves). Fig.\ref{pixelation_GREAT08} and fig.\ref{pixelation_Gaussian} are for the PSF defined in eq.(\ref{PSFs}) and the Gaussian PSF respectively. The two PSFs have the same FWHM. Clearly, according to the figures, the direct measurement (\ie, no interpolation or any other treatment for the pixelation effect) consistently yields no more than $0.5\%$ relative deviations from the correct answers as long as the pixel size is not larger than $4$, which corresponds to about $1/3$ of the FWHM of the PSF in both cases. It is useful to note that the ratio $1/3$ between the pixel size and the FWHM of the PSF is indeed reasonable in practice, and used in the GREAT08 project. The Log-Bicubic method, on the other hand, can cause a few percent error on the shear measurement when the PSF has a power-law decaying form, despite its superiority in the Gaussian PSF case. 

\begin{figure}
\centerline{\epsfxsize=9cm\epsffile{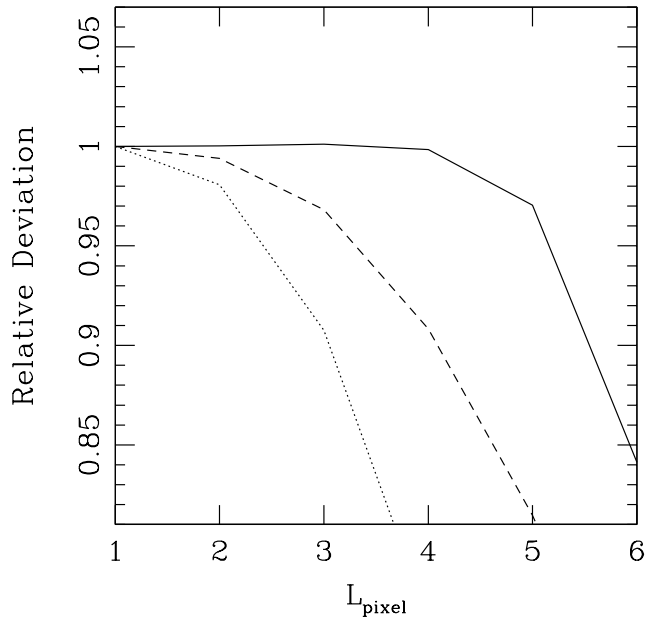}}
\caption{This figure compares the qualities of three different ways of treating the pixelation effect. The input PSF is defined in eq.(\ref{PSFs}), with FWHM$\approx r_d=12$. For each chosen pixel size $L_{pixel}$ (in unit of the grid size), we plot the measured first component of the reduced shear from a single galaxy (defined on the left side of the first half of eq.(\ref{mmain})) normalized by its measured value at $L_{pixel}=1$. The solid curve is from a direct Fourier transformation without any interpolation; the dashed curve is from the data interpolated by the Log-Bicubic method; the dotted line is from the data interpolated by the Bicubic method. In the second and last cases, the pixel size of the interpolated images is always set to $1$.}
\label{pixelation_GREAT08}
\end{figure}

\begin{figure}
\centerline{\epsfxsize=9cm\epsffile{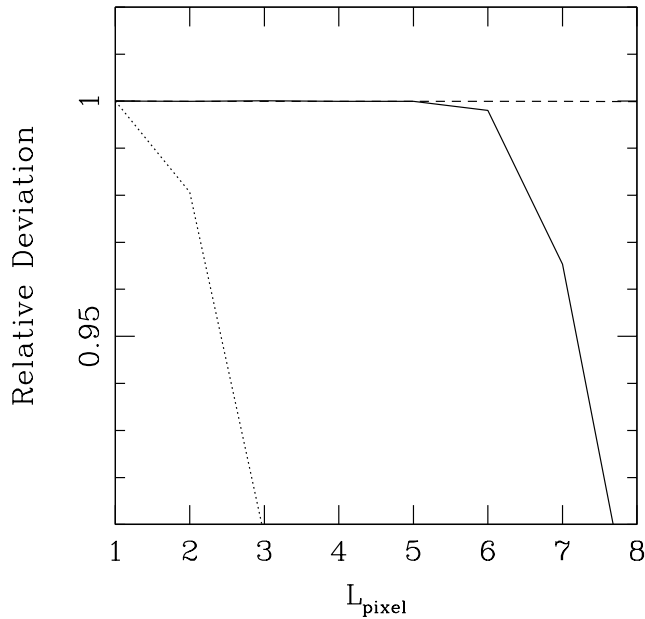}}
\caption{Same as fig.\ref{pixelation_GREAT08}, except that the input PSF has a Gaussian form define in eq.(\ref{igaussian}). The FWHMs  of the two PSFs are equal.}
\label{pixelation_Gaussian}
\end{figure}

For further checking the accuracy of direct Fourier space measurements on pixelated images,  we use a large ensemble of mock galaxies and the input shear components given in the previous section to study how the values of the multiplicative and additive bias vary with the pixel size. Here and in the rest of the paper, we use the PSF defined in eq.(\ref{PSFs}). In table \ref{pixelation_cases1} and \ref{pixelation_cases2}, we show the results for six different choices of the pixel size. For each pixel size, we consider four choices of galaxy size, which are listed in the tables as small, medium, large, largest galaxies (written as ``Gal.'' in the tables for abbreviation). The generation of the medium size galaxies are described in \S\ref{general_setup}. The small, large, largest galaxies are generated through the same procedures, except that the overall scales of the galaxies are multiplied by factors of 0.5, 2, and 4 respectively before they are convolved with the PSF. The average FWHMs of the post-seeing galaxies of small, medium, large, and very large sizes are equal to 1.02, 1.05, 1.16, and 1.43 times the FWHM of the PSF respectively. In our simulations, when a galaxy is generated, it is used for different shear values and pixel sizes for the purpose of saving time. Galaxies of different sizes are always generated using different sets of random seeds instead of simple rescaling. $6.5\times 10^6$ mock galaxies are used to measure the biases in every case of the tables. 

According to table \ref{pixelation_cases1} and \ref{pixelation_cases2}, for all cases with $L_{pixel}\leq 4$, the multiplicative and additive biases are consistent with zero at the sub-percent accuracy level . This agrees with the conclusions from fig.\ref{pixelation_GREAT08} and \ref{pixelation_Gaussian}. Note that there are roughly $0.1\%$ residual systematic errors, which are likely from third order corrections. On the other hand, the data on the tables show that the pixelation effect becomes less important for larger galaxies. This is consistent with our intuitions. In the next section, we include noise in the simulations to test our method under realistic conditions.

\begin{table}
\centering
\begin{tabular}{ccccc}

\hline\hline		
             &  Small Gal.             & Medium Gal.   &  Large Gal.  & Largest Gal.\\
\hline\hline		                                                             
$L_{pixel}=1$ & $m_1(10^{-3}):2.0\pm 1.5$ & $-0.8\pm 1.5$  & $3.8\pm 1.5$  & $2.8\pm 1.4$\\
             & $c_1(10^{-5}):9.6\pm 4.8$ & $4.3\pm 4.8$   & $-2.1\pm 4.7$ & $-4.8\pm 4.7$\\
\hline
$L_{pixel}=2$ & $3.2\pm 1.5$             & $-0.5\pm 1.5$  & $3.9\pm 1.5$  & $2.9\pm 1.4$\\
             & $9.4\pm 4.8$             & $4.2\pm 4.8$   & $-2.1\pm 4.7$ & $-4.8\pm 4.7$\\
\hline
$L_{pixel}=3$ & $7.2\pm 1.5$             & $0.6\pm 1.5$   & $4.2\pm 1.5$  & $3.0\pm 1.4$\\
             & $10.2\pm 4.8$            & $4.3\pm 4.8$   & $-2.1\pm 4.7$ & $-4.8\pm 4.7$\\
\hline
$L_{pixel}=4$ & $-2.5\pm 1.5$            & $-1.9\pm 1.5$  & $3.5\pm 1.5$  & $2.8\pm 1.4$\\
             & $8.9\pm 4.8$             & $4.1\pm 4.8$   & $-2.1\pm 4.7$ & $-4.8\pm 4.7$\\
\hline
$L_{pixel}=5$ & $-91.2\pm 1.3$           & $-30.6\pm 1.4$ & $-4.6\pm 1.4$ & $0.4\pm 1.4$\\
             & $9.8\pm 4.3$             & $4.1\pm 4.6$   & $-2.1\pm 4.7$ & $-4.8\pm 4.6$\\
\hline
$L_{pixel}=6$ & $-372\pm 1$              & $-174.5\pm 1.2$& $-58.4\pm 1.4$& $-16.2\pm 1.4$\\
             & $7.5\pm 3.1$             & $3.2\pm 3.9$   & $-2.0\pm 4.4$ & $-4.7\pm 4.6$\\

\hline\hline		

\end{tabular}
\caption{The multiplicative bias $m_1$ and the additive bias $c_1$ of the first component of the reduced shear measured using the method of this paper. In each data cell, the upper value is $m_1$ in unit of $10^{-3}$, and the lower value is $c_1$ in unit of $10^{-5}$. The results are shown for six choices of the pixel size and four choices of the galaxy size. Noise is not included in this set of simulations. }
\label{pixelation_cases1}
\end{table}

\begin{table}
\centering
\begin{tabular}{ccccc}

\hline\hline		
             &  Small Gal.              & Medium Gal.    &  Large Gal.   & Largest Gal.\\
\hline\hline		                                                             
$L_{pixel}=1$ & $m_2(10^{-3}):-0.9\pm 1.5$ & $-1.6\pm 1.5$   & $-0.4\pm 1.5$  & $1.6\pm 1.4$\\
             & $c_2(10^{-5}):7.3\pm 4.8$  & $-0.5\pm 4.8$   & $7.4\pm 4.7$   & $1.2\pm 4.7$\\
\hline
$L_{pixel}=2$ & $0.3\pm 1.5$              & $-1.3\pm 1.5$   & $-0.3\pm 1.5$  & $1.7\pm 1.4$\\
             & $7.6\pm 4.8$              & $-0.4\pm 4.8$   & $7.5\pm 4.7$   & $1.2\pm 4.7$\\
\hline
$L_{pixel}=3$ & $4.3\pm 1.5$              & $-0.2\pm 1.5$   & $0.0\pm 1.5$   & $1.8\pm 1.4$\\
             & $8.9\pm 4.8$              & $-0.3\pm 4.8$   & $7.5\pm 4.7$   & $1.2\pm 4.7$\\
\hline
$L_{pixel}=4$ & $-5.4\pm 1.5$             & $-2.7\pm 1.5$   & $-0.7\pm 1.5$  & $1.5\pm 1.4$\\
             & $10.2\pm 4.8$             & $-0.3\pm 4.8$   & $7.4\pm 4.7$   & $1.2\pm 4.7$\\
\hline
$L_{pixel}=5$ & $-100.3\pm 1.3$           & $-32.2\pm 1.4$  & $-8.5\pm 1.4$  & $-0.5\pm 1.4$\\
             & $11.0\pm 4.3$             & $-0.1\pm 4.6$   & $7.4\pm 4.7$   & $1.2\pm 4.6$\\
\hline
$L_{pixel}=6$ & $-420.2\pm 0.9$           & $-192.3\pm 1.2$ & $-62.8\pm 1.4$ & $-16.8\pm 1.4$\\
             & $54.2\pm 2.8$             & $4.4\pm 3.9$    & $7.2\pm 4.4$   & $1.2\pm 4.6$\\

\hline\hline		

\end{tabular}
\caption{Same as table \ref{pixelation_cases1}, except that it is for multiplicative and additive biases ($m_2$ and $c_2$) of the second component of the reduced shear.}
\label{pixelation_cases2}
\end{table}

\subsection{Tests with Photon Noise}
\label{ssNoise}

There are three types of photon noise: 1. the spatial fluctuations of the sky background; 2. The Poisson photon counting noise of the background photons; 3. The source Poisson noise. The noise treatment defined in \cite{zhang09} only deals with the background noise. We currently do not have a way to treat the source Poisson noise, which can certainly introduce systematic errors to the shear recovery. In the weak lensing community,  one typically use the signal-to-noise-ratio (SNR) to denote the noise amplitude relative to the source signal within the half-light radius of the source (post-seeing). Here, we point out that at least within the scope of our method, {\it it is important to differentiate between background and source noise} for the reasons just discussed. In other words, for any given SNR, whether most of the noise is due to the background or the source can significantly affect the shear recovery accuracy. The ratio of the background noise to the source noise is determined by the luminosities of the source and the background.

The purpose of this section is to study the shear recovery accuracy under different SNR. For a given SNR, we consider three cases: 1. solely background noise; 2. the background and source Poisson noises are even; 3. solely source Poisson noise. The background noise in our simulations is generated as the Poisson noise of a homogeneous background. To save computational time, the background fluctuations are not included in this paper. This should not affect our main conclusions. Fig.\ref{gal_cases} shows sample images of a single simulated galaxy with three SNR values: 10, 20, 40. For each SNR, the three cases just discussed are shown in the same row.  

\begin{figure}
\centerline{\epsfxsize=9cm\epsffile{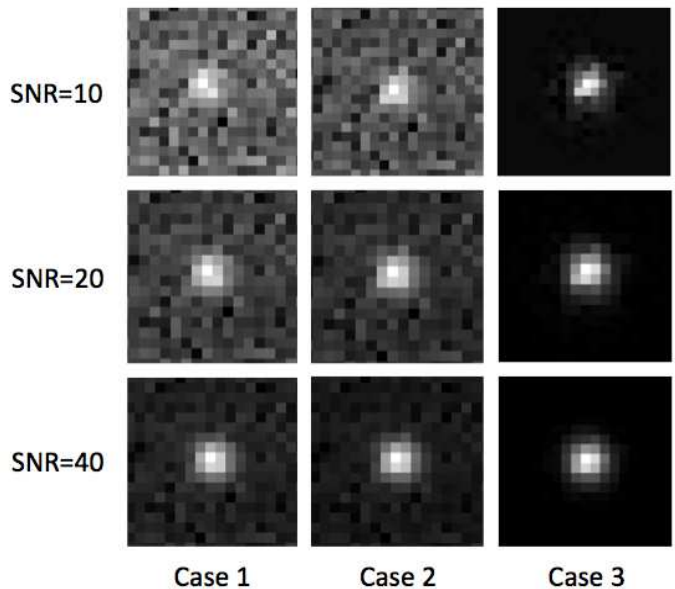}}
\caption{Sample images of the same galaxy under 9 different noise conditions. Images in the same row have the same SNR, which is shown on the left. Images in each column share one of the three cases: 1. solely Poisson noise of the background; 2. even contributions from the Poisson noises of the background and the source; 3. solely source Poisson noise. }
\label{gal_cases}
\end{figure}

The galaxies and PSF used in the simulations of this section are discussed in \S\ref{general_setup}. The pixel size $L_{pixel}$ is taken to be $4$, which matches what is assumed in the GREAT08 project. For treating the background noise using the method of \cite{zhang09}, we randomly generate an image of pure background noise for each source image. Similar to what is done in \S\ref{pixelation}, when a galaxy is generated, it is used for different shear values and noise levels for the purpose of saving time. For each simulated galaxy, the background Poisson noise in each pixel is generated as a Gaussian random number with a unitary variance\footnote{The Gaussian distribution is a good approximation to the Poisson distribution when the background photon number in each pixel is not a very small number.}, and rescaled according to the assumed SNR due to the background noise (${\rm SNR}_B$). A separate image of pure background noise is generated in the same way with different random seeds.  The source Poisson noise in each pixel is generated as a Gaussian random number with unitary variances multiplied by the square root of the source flux of the same pixel. It is then rescaled according to the value of SNR due to the source Poisson noise (${\rm SNR}_S$). Since the source Poisson noise and the background noise are not correlated, we use the following equation to relate SNR, ${\rm SNR}_B$, ${\rm SNR}_S$:
\begin{equation}
\label{SNR}
\frac{1}{{\rm SNR}^2}=\frac{1}{{\rm SNR}_B^2}+\frac{1}{{\rm SNR}_S^2}
\end{equation}
For each given SNR, case 1, 2, and 3 correspond to ${\rm SNR}_B/{\rm SNR}_S=0$, $1$, and $\infty$ respectively. We consider six choices of SNR: 10, 20, 30, 40, 60, 80. The main results are shown in table \ref{noise_cases1} and \ref{noise_cases2}. $2.6\times 10^7$ mock galaxies are used to measure the biases in every case of the tables. 

\begin{table}
\centering
\begin{tabular}{cccc}

\hline\hline		
                     &  Case 1                                    & Case 2                   &   Case 3             \\
\hline\hline		                                                             
SNR$=80$ & $m_1(10^{-3}):0.2\pm 1.4 $ & $5.0\pm 1.2$        & $8.8\pm 0.8$  \\
                     & $c_1(10^{-5}):-7.1\pm 4.7$  & $-6.9\pm 3.8$       & $-4.6\pm 2.6$ \\
\hline
SNR$=60$ & $0.7\pm 1.8$                          & $9.3\pm 1.4$         & $16.6\pm 0.9$ \\
                     & $-8.4\pm 5.9$                         & $-8.2\pm 4.6$        & $-5.1\pm 2.8$ \\
\hline
SNR$=40$ & $1.9\pm 2.6$                          & $21.3\pm 2.0$       & $39.2\pm 1.0$ \\
                     & $-11.1\pm 8.6$                       & $-10.9\pm 6.5$      & $-6.1\pm 3.3$ \\
\hline
SNR$=30$ & $2.9\pm 3.5$                           & $38.0\pm 2.7$       & $72.5\pm 1.2$  \\
                     & $-13.9\pm 11.4$                     & $-13.8\pm 8.6$      & $-7.2\pm 3.9$ \\
\hline
SNR$=20$ & $5.0\pm 5.5$                           & $87.5\pm 4.2$       & $179.5\pm 1.8$ \\
                     & $-19.9\pm 17.8$                      & $-20.5\pm 13.6$   & $-10.0\pm 5.7$ \\
\hline
SNR$=10$ & $10.3\pm 13.3$                       & $451.9\pm 12.3$   & $1529.5\pm 6.8$   \\
                     & $-40.8\pm 43.4$                      & $-56.0\pm 40.1$    & $-34.8\pm 22.2$   \\

\hline\hline		

\end{tabular}
\caption{The multiplicative bias $m_1$ and the additive bias $c_1$ of the first component of the reduced shear measured using the method of this paper. In each data cell, the upper value is $m_1$ in unit of $10^{-3}$, and the lower value is $c_1$ in unit of $10^{-5}$. The results are shown for six choices of SNR. For each SNR, three cases are considered: 1. solely background noise; 2. the background noise and source Poisson noise have equal amplitudes; 3. solely source Poisson noise. }
\label{noise_cases1}
\end{table}

\begin{table}
\centering
\begin{tabular}{cccc}

\hline\hline		
                     &  Case 1                                       & Case 2                 &   Case 3             \\
\hline\hline		                                                             
SNR$=80$ & $m_2(10^{-3}):-0.3\pm 1.4 $  & $4.3\pm 1.2$       & $8.3\pm 0.8$  \\
                     & $c_2(10^{-5}):3.0\pm 4.7$      & $2.2\pm 3.8$       & $1.4\pm 2.6$ \\
\hline
SNR$=60$ & $0.1\pm 1.8$                             & $8.3\pm 1.4$       & $15.9\pm 0.9$  \\
                     & $2.9\pm 5.9$                             & $1.9\pm 4.6$        & $1.1\pm 2.8$ \\
\hline
SNR$=40$ & $0.8\pm 2.6$                             & $19.6\pm 2.0$     & $38.1\pm 1.0$  \\
                     & $2.2\pm 8.6$                             & $1.2\pm 6.5$        & $0.3\pm 3.3$  \\
\hline
SNR$=30$ & $1.4\pm 3.5$                             & $35.7\pm 2.7$      & $70.7\pm 1.2$ \\
                     & $1.1\pm 11.4$                           & $0.2\pm 8.6$        & $-0.5\pm 3.9$ \\
\hline
SNR$=20$ & $2.8\pm 5.5$                             & $84.0\pm 4.2$      & $176.5\pm 1.8$ \\
                     & $-2.7\pm 17.8$                          & $-2.7\pm 13.6$    & $-2.2\pm 5.7$ \\
\hline
SNR$=10$ & $6.7\pm 13.3$                           & $444.5\pm 12.3$  & $1516.7\pm 6.8$  \\
                     & $-26.2\pm 43.4$                        & $-23.7\pm 40.1$   & $-15.3\pm 22.2$ \\

\hline\hline		

\end{tabular}
\caption{Same as table \ref{noise_cases2}, except that it is for multiplicative and additive biases ($m_2$ and $c_2$) of the second component of the reduced shear.}
\label{noise_cases2}
\end{table}

The results in tables \ref{noise_cases1} and \ref{noise_cases2} show that the shear recovery accuracy in our method strongly depends on whether the noise is mostly due to the background or the source. As shown in all the first cases in the tables, the shear measurement errors caused by pure background noise can be cleanly removed with the treatment of \cite{zhang09}. We can almost confirm the sub-percent level accuracy for SNR$\sim 10$ in this case. The statistical errors, though systematically larger than those in other cases of the same SNR due to the nature of the noise treatment of \cite{zhang09}, can be further narrowed down to check the accuracy of the method at even smaller SNR with a larger galaxy ensemble. In the second and third cases of each SNR, the multiplicative biases are clearly all larger than their statistical errors, and reach $\gtrsim 1\%$ level for SNR$\lesssim 60$. To achieve a sub-percent level accuracy, we find that it requires ${\rm SNR}_S\gtrsim 80-100$ in our method. It corresponds to collecting roughly $10^4$ source photons per galaxy. Indeed, the limit on the ${\rm SNR}_S$ is so far the only requirement for achieve very high accuracy in our shear measurement method. 

Physically, the shear recovery errors in case 2 and 3 are caused by the fact that the galaxy shapes are intrinsically modified by their own Poisson noise. It seems likely to smooth out the source Poisson noise if the half-light radius of the source covers enough pixels. However, it is hard to see how useful such a procedure is, because for a given ${\rm SNR}_S$, more pixels within the half-light radius mean a larger Poisson noise in each pixel. In the end, the total source flux should always has a relative uncertainty of order $1/{\rm SNR}_S$. Along this line of thinking, one can even guess a simple relation between ${\rm SNR}_S$ and the amplitude of the multiplicative bias $m$ through the following procedure: 
\begin{eqnarray}
\label{m_SNR}
&&m\sim\frac{\delta g}{g}\sim \frac{\left\langle \delta (f^2)\right\rangle}{\left\langle f^2\right\rangle}\\ \nonumber
&&\left\langle \delta (f^2)\right\rangle\sim\left\langle (f_S+f_N)^2-f_S^2\right\rangle\sim\left\langle 2f_Sf_N+f_N^2\right\rangle\sim\left\langle f_N^2\right\rangle \\ \nonumber
&&\Rightarrow m\sim\frac{\left\langle f_N^2\right\rangle}{\left\langle f_S^2\right\rangle}\sim\frac{1}{{\rm SNR}_S^2}
\end{eqnarray} 
In the above estimation, $g$ stands for the reduced shear, $f$ refers to the total flux within the half-light radius of the source, and the subscripts $S$ and $N$ denote the source and noise. Note that the $\left\langle f_N^2\right\rangle$ due to the background noise is statistically estimated using a neighboring map of pure noise and subtracted in our method. The remaining shear measurement errors therefore solely come from the source Poisson noise. The results in tables \ref{noise_cases1} and \ref{noise_cases2}  indeed indicate that $m\approx 60/{\rm SNR}_S^2$, except when ${\rm SNR}_S\lesssim 10$. Note that this is true in both case 2 and case 3. We caution that the derivation in eq.(\ref{m_SNR}) is not rigorous. It only provides a possible understanding of the results seen in tables \ref{noise_cases1} and \ref{noise_cases2}. We will study source Poisson noise more systematically in a future work.

\section{Summary}
\label{summary}

Based on \cite{zhang08,zhang09,zhang10}, we have established a robust way of measuring the cosmic shear to the second order in accuracy. The method is well defined regardless of the morphologies of the galaxies and the PSF. We have also provided a useful theorem for judging the accuracy of any shear measurement method at the second order based on its properties at the first order.

For our method to achieve the accuracy at sub-percent level, the CCD pixel size is required to be not larger than about $1/3$ of the FWHM of the PSF, regardless of whether the PSF has a power-law or exponential profile at large distances\footnote{For PSFs with strong diffraction spikes, we need to further test the method. This will be done in a future work.}. Using more than $10^7$ mock galaxies of unrestricted morphologies, we have tested the accuracy of this method under different noise conditions.  We find that it is useful to separately discuss the background and source noise for any given SNR. The background noise, which is uncorrelated with the source flux, can be removed in a simple and clean way using the method of \cite{zhang09}. In our simulations with only background noise, the shear measurement errors are found to be less than $1\%$ for SNR as low as $10$, and the conclusion can likely be extended to even smaller SNR with simulations of a larger galaxy ensemble. On the other hand, the source Poisson noise, which strongly couples with the distribution of the source flux, remains to be the main cause of the shear measurement errors in our method. For a sub-percent level accuracy, we require the SNR of the source Poisson noise to be $\gtrsim 80-100$. This corresponds to collecting about $10^4$ source photons per galaxy. The treatment of source Poisson noise is unclear at present, and will hopefully be addressed in a future work.

\acknowledgments{JZ would like to thank the anonymous referees for illuminating comments, Yi Mao for his help on parallel computations, and the Texas Advanced Computing Center for providing High Performance Computing (HPC) resources. JZ is currently supported by the TCC Fellowship of Texas Cosmology Center of the University of Texas at Austin. JZ was previously supported by the TAC Fellowship of the Theoretical Astrophysics Center of UC Berkeley, where part of this work was done. }

\section*{Appendix -- The Relation Between Two Definitions of Cosmic Shears}
\label{appendix}

The convention of defining the cosmic shear/convergence in this paper is different from what is used in \cite{zhang08,zhang09,zhang10}. More specifically, as shown in eq.(\ref{define1}), we define shear/convergence using the following formula:
\begin{eqnarray}
\label{convention2}
\left(\begin{array}{c}
x_1^S \\
x_2^S \end{array}\right)=\left(\begin{array}{cc}
1-\kappa-\gamma_1 & -\gamma_2 \\
-\gamma_2 & 1-\kappa+\gamma_1 \end{array}\right)\left(\begin{array}{c}
x_1^L \\
x_2^L \end{array}\right).
\end{eqnarray}
In \cite{zhang08,zhang09,zhang10}, however, shear/convergence is defined as:
\begin{eqnarray}
\label{convention1}
\left(\begin{array}{c}
x_1^L \\
x_2^L \end{array}\right)=\left(\begin{array}{cc}
1+\kappa'+\gamma_1' & \gamma_2' \\
\gamma_2' & 1+\kappa'-\gamma_1' \end{array}\right)\left(\begin{array}{c}
x_1^S \\
x_2^S \end{array}\right).
\end{eqnarray}

Note that these two conventions are equivalent up to the first order in shear/convergence, but not to the second order. In other words, the values of $\gamma_1$, $\gamma_2$, and $\kappa$ are equal to those of $\gamma_1'$, $\gamma_2'$, and $\kappa'$ respectively only when the second and higher order lensing terms are neglected. It is straightforward to derive the relation between the two conventions based on the following identity:
\begin{equation}
\label{relation_identity}
\left(\begin{array}{cc}
1+\kappa'+\gamma_1' & \gamma_2' \\
\gamma_2' & 1+\kappa'-\gamma_1' \end{array}\right)
=\left(\begin{array}{cc}
1-\kappa-\gamma_1 & -\gamma_2 \\
-\gamma_2 & 1-\kappa+\gamma_1 \end{array}\right)^{-1}. 
\end{equation}
To the second order in shear/convergence, we get:
\begin{eqnarray}
\label{relation}
\kappa'&=&\kappa+\kappa^2+\gamma_1^2+\gamma_2^2,\nonumber \\
\gamma_1'&=&\gamma_1(1+2\kappa),\nonumber \\
\gamma_2'&=&\gamma_2(1+2\kappa).
\end{eqnarray}
As a result, we find:
\begin{equation}
\label{reduced_shear_relation}
\gamma_{1,2}'(1-\kappa')=\gamma_{1,2}(1+\kappa).
\end{equation}
Therefore, for theorists intending to make second order predictions for the cosmic shears, conventions should be explicitly mentioned, as they carry different meanings and consequences at the second order level.

\vskip 1cm

\label{lastpage}


\begin{thebibliography}{40}

\bibitem{hj08}
Hoekstra H. \& Jain B., 2008, Annual Review of Nuclear and Particle Science, 58, 99


\bibitem{kwl00}
Kaiser N., Wilson G. \& Luppino G., astro-ph/0003338
`1	
\bibitem{vw00}
van Waerbeke L. et al., 2000, A\&A, 358, 30

\bibitem{wittman00}
Wittman D., Tyson J., Kirkman D., Dell'Antonio I., Bernstein G., 2000, Nature, 405, 143

\bibitem{heymans06}
Heymans C. et al., 2006, MNRAS, 368, 1323

\bibitem{massey07}
Massey R. et al., 2007, MNRAS, 376, 13

\bibitem{bridle09a}
Bridle S. et al., 2009, Annals of Applied Statistics, Vol.3, No.1, 6-37

\bibitem{bridle09b}
Bridle S. et al., 2009, arXiv: 0908.0945

\bibitem{tyson90}
Tyson J., Wenk R. \& Valdes F., 1990, ApJL, 349, L1

\bibitem{bonnet95}
Bonnet H. \& Mellier Y., 1995, A\&A, 303, 331

\bibitem{kaiser95}
Kaiser N., Squires G. \& Broadhurst T., 1995, ApJ, 449, 460

\bibitem{luppino97}
Luppino G. \& Kaiser N., 1997, ApJ, 475, 20

\bibitem{hoekstra98}
Hoekstra H., Franx M., Kuijken K., Squires G., 1998, ApJ, 504, 636

\bibitem{rhodes00}
Rhodes J., Refregier A. \& Groth E.,  2000, ApJ, 536, 79

\bibitem{kaiser00}
Kaiser N., 2000, ApJ, 537, 555

\bibitem{bridle01}
Bridle S., Gull S., Bardeau S., Kneib J., 2001, in Scientific N.~W., ed., Proceedings of the Yale Cosmology Workshop

\bibitem{bernstein02}
Bernstein G. \& Jarvis M., 2002, AJ, 123, 583 

\bibitem{refregierbacon03}
Refregier A. \& Bacon D., 2003, MNRAS, 338, 48

\bibitem{massey05}
Massey R. \& Refregier A., 2005, MNRAS, 363, 197

\bibitem{kuijken06}
Kuijken K., 2006, A\&A, 456, 827K

\bibitem{miller07}
Miller L., Kitching T., Heymans C., Heavens A., van Waerbeke L., 2007, MNRAS, 382, 315

\bibitem{nakajima07}
Nakajima R. \& Bernstein G., 2007, AJ, 133, 1763

\bibitem{kitching08}
Kitching T., Miller L., Heymans C., van Waerbeke L., Heavens A., 2008, MNRAS, 390, 149

\bibitem{zhang08}
Zhang J., 2008, MNRAS, 383, 113

\bibitem{zhang09}
Zhang J., 2010a, MNRAS, 403, 673

\bibitem{htbj06}
Huterer D., Takada M., Bernstein G., Jain B., 2006, MNRAS, 366, 101

\bibitem{ar08}
Amara A., Refregier A., 2008, MNRAS, 391, 228

\bibitem{dsw06}
Dodelson S., Shapiro C., White M., 2006, Phys. Rev. D, 73, 023009

\bibitem{shapiro09}
Shapiro C., 2009, ApJ, 696, 775

\bibitem{wittman01}
Wittman D., Tyson J., Margoniner V., Cohen J., Dell'Antonio I., 2001, ApJ, 557, L89

\bibitem{hoekstra01}
Hoekstra H. et al., 2001, ApJ, 548, L5

\bibitem{gray02}
Gray M., Taylor A., Meisenheimer K., Dye S., Wolf C., Thommes E., 2002, ApJ, 568, 141

\bibitem{taylor04}
Taylor A. et al., 2004, MNRAS, 353, 1176

\bibitem{broadhurst05}
Broadhurst T., Takada M., Umetsu K., Kong X., Arimoto N., Chiba M., Futamase T., 2005, ApJ, 619, L143

\bibitem{leonard07}
Leonard A., Goldberg D., Haaga J., Massey R., 2007, ApJ, 666, 51

\bibitem{heymans08}
Heymans C. et al., 2008, MNRAS, 385, 1431

\bibitem{deb08}
Deb S., Goldberg D., Ramdass V., 2008, ApJ, 687, 39

\bibitem{zhang10}
Zhang J., Komatsu E., 2011, MNRAS, 414, 1047

\bibitem{bs01}
Bartelmann M. \& Schneider P.,  2001, Physics Reports, 340, 291

\bibitem{press92}
Press W., Flannery B., Teukolsky S., Vetterling W., 1992, {\it Numerical Recipes}, Cambridge Univ. Press, 2nd ed.

\end{thebibliography}
\end{document}